\newtheorem*{theorem A}{Theorem A}
\newtheorem*{theorem B}{N\"olker's Theorem}
\newtheorem{proposition}{Proposition}[section]
\newtheorem{Defn}{Definition}
\theoremstyle{remark}
\newtheorem{remark}{Remark}[section]
\theoremstyle{remark}
\newtheorem{example}{Example}[section]
\def\ps@pprintTitle{%
	\let\@oddhead\@empty
	\let\@evenhead\@empty
	\let\@oddfoot\@empty
	\let\@evenfoot\@oddfoot
}
\begin{document}
\begin{frontmatter}
	\title{An Unique and Novel Graph Matrix for Efficient Extraction of Structural Information of Networks} %,t2}}
	
	%\tnotetext[t2] {}
	\author[siva]{Sivakumar Karunakaran} %\fnref{fn1}}
	\ead{sivakumar\_karunakaranm@srmuniv.edu.in}
	\author[lavanya]{Lavanya Selvaganesh\corref{cor1}} % \fnref{fn2}}
	\ead{lavanyas.mat@iitbhu.ac.in}
	\cortext[cor1]{Corresponding author.}
	%\fntext[fn1]{}
	%\fntext[fn2]{}
	
	\address[siva]{SRM Research Institute, S R M Institute of Science and Technology Kattankulathur, Chennai - 603203, INDIA}
	
	\address[lavanya]{Department of Mathematical Sciences, Indian Institute of Technology (BHU), Varanasi-221005, INDIA}
	
\begin{abstract}
	In this article, we propose a new type of square matrix associated with an undirected graph by trading off the naturally imbedded symmetry in them. The proposed matrix is defined using the neighbourhood sets of the vertices. It is called as neighbourhood matrix and it is denoted by $ \mathcal{NM}(G)$ as this proposed matrix also exhibits a  bijection between the product of the two graph matrices, namely the adjacency matrix and the graph Laplacian. This matrix can also be obtained by looking at every vertex and the subgraph with vertices from the first two levels in the level decomposition from that vertex. The two levels in the level decomposition of the graph gives us more information about the neighbour of a vertex along with the neighbour of neighbour of a vertex. This insight is required and is found useful in studying the impact of broadcasting in social networks, in particular, and complex networks, in general. We establish several interesting properties of the $ \mathcal{NM}(G) $. In addition, we also show how to reconstruct a graph $G$, given an $ \mathcal{NM} (G)$. The proposed  matrix is also found to solve many graph theoretic problems using less time complexity in comparison to the existing algorithms.\footnote{A preliminary version of this article namely the definition of the newly proposed matrix was presented in ICDM 2016(June 09-11), Siddaganga Institute of Technology, Tumkur-572102, Karnataka, INDIA and few of the characterizations were presented in the Fifth India-Taiwan Conference on Discrete Mathematics(18-21 July, 2017), Tamkang University, Taiwan.}

\end{abstract}

\begin{keyword}
% Separate keyword by \sep
Graph Matrices  \sep Graph Characterization \sep Product of Matrices \sep Graph Properties. 

% Write the classification number
Mathematics Subject Classification : \textbf{MSC}, 05C50, 05C62, 05C82  

\end{keyword}

\end{frontmatter}

\section{Introduction}
In the study of complex and social networks, one of the interesting and challenging problem is to study the impact of a change that occurs to a node. Such studies are being done to analyse the network's behavioural changes both locally as well as globally, \cite{BIOLOGY}.  One such problem is in reconstructing a graph when partial information is known and/or predict the dynamical changes occurring in a network. To tackle this problem, we were determined to approach it by studying graphs through their matrices.

Matrices play an important role in the study of graphs and their representations. It is well known that for undirected graphs, among all graph matrix forms, adjacency matrix and Laplacian matrix has received wide attention due to their symmetric nature \cite{GraphsMatrices, Graphtheory, MCM3}. In the literature, many other types of matrices that could be associated with a graph \cite{GraphsMatrices, Bapatetal, MCM3, SURVEY}. For an undirected graph, every such matrix is found to be symmetric and is not of help to solve our problem. Further, in \cite{Bapatetal}, the authors discuss about the product of two graphs and its representation using product of the adjacency matrices of the graphs. However, there is no literature dealing with the product of two types of matrices of a graph.

In this paper, we handle one such problem involved in defining, analysing and correlating the product of graph matrices with the graph and several of its properties. To this end, we propose a novel representative matrix for a graph referred to as $ \mathcal{NM}(G)$. We first define this matrix by using the notion of neighbourhood of a vertex in a graph and then endorse its relationship with the product of  two different types of graph matrices. We make sure that the matrix that we are defining in this paper is not always symmetric and this helps us in proving many network properties quite easily.

The paper is organized as follows: In section 2, we present all the basic definitions, notations and properties required. In subsection of 2, we introduce the novel concept of $ \mathcal{NM}(G) $ and discuss several of its properties. In section 3, we discover some interesting characterizations of the graph using the $ \mathcal{NM}(G) $. We conclude the paper in section 4 with some insight on future scope.

\section{ Definitions and Notations:}
Throughout this paper, we consider only undirected, unweighed simple graphs. For all basic notations and definitions of graph theory, we follow the books by J.A. Bondy and U.S.R. Murty \cite{Graphtheory} and D.B. West \cite{GraphTheoryWest}. In this section, we present all the required notations and define the $ \mathcal{NM} (G)$. Let $ G(V,E) $ be a graph with vertex set $ V(G) $ and edge set $ E(G) $. For a vertex $v\in V(G)$, let $N_{G}(v)$ denote the set of all neighbours of v and  $N_{G}[v]=\{v\}\cup N_{G}(v)$, denote the closed neighbourhood of $ v $. The degree of a vertex $v$ is given by $ deg(v) $ or $|N_{G}(v)|$. Let $ A_{G} $ (or $ A $) denote the adjacency matrix of the grpah $ G $. Let the degree matrix $ D(G) $ (or $ D $ ) be the diagonal matrix with the degree of the vertices as its diagonal elements. Let $ C(G) $ be the Laplacian matrix obtained by $ C(G) = D(G)-A_{G} $.
\begin{Defn}
	Given a graph $G$, the product of the adjacency matrix and the degree matrix, denoted by $AD=[ad_{ij}]$, is defined as $$ad_{ij}=\begin{cases}
	|N_{G}(j)|,&    \text{ if }   (i,j)\in E(G)\\
	0 , &  otherwise \\
	\end{cases} $$\\
	Similarly, the product of the degree matrix and the adjacency matrix, denoted by $ DA=[da_{ij}] $, is defined as $$da_{ij}=\begin{cases}
	|N_{G}(i)|,&    \text{ if }   (i,j)\in E(G)\\ 
	0,  &  otherwise\\
	\end{cases} $$
\end{Defn}
\begin{remark}
	From the above definitions it follows immediately that $AD^T=DA.$
\end{remark}
\begin{remark}\label{Remark.2}
	If  $G$  is regular or contains regular-components then by the definition, $AD$ matrix is symmetric. Hence by above remark $AD$ and $DA$ becomes equal.
\end{remark}
\begin{Defn}
	Given a graph $ G $, the square of the adjacency matrix $A^2=[a^2_{ij}]$, is defined as 
	$$a^2_{ij}=\begin{cases}
	|N_{G}(i)|,  &  \text{ if }    i=j\\
	|N_{G}(i)\cap N_{G}(j)|,  &    \text{ if }  i\neq j  \\
	\end{cases} $$
\end{Defn}
It is well known that the $ ij ^{th}$ entries of the square of adjacency matrix denotes the number of walks of length 2 between $ i $ and $ j $. 

Another concept which we require before proceeding to the main result is the level decomposition of a graph with respect to a source node, which is defined by the Breadth First Search Traversal technique. 

\textit{Breadth First Search (BFS)} is a graph traversal technique \cite{Graphtheory} where a node (source node) and its neighbours are visited first and then the neighbours of neighbours. The algorithms returns not only a search tree rooted at the source node but also a function $l: V\rightarrow \mathbb{N}$, which records the level of each vertex in the tree, that is, the distance of each vertex from the source node. In simple terms, the BFS algorithm traverses level wise from the source. First it traverses level 1 nodes (direct neighbours of source node) and then level 2 nodes (neighbours of neighbours of source node) and so on. We refer to such a level representation with reference to a source node as the level decomposition from the source node.  

We next extend the above notion of product of graph matrices to obtain a new class of matrix and establish its properties.

\subsection{$\mathcal{NM}(G)$ and its properties}
Now we introduce the idea of $ \mathcal{NM}(G) $ and describe its properties 
\begin{Defn}
	Given a graph $ G $, the neighbourhood matrix, denoted by $ \mathcal{NM}(G) =[\eta_{ij}]$ is defined as 
	$$\eta_{ij}=\begin{cases}
	-|N_{G}(i)|,  & \text{ if }  i=j\\
	|N_{G}(j)-N_{G}(i)|,  &   \text{ if }   (i,j)\in E(G)\\
	-|N_{G}(i) \cap N_{G}(j)|,  & \text{ if }    (i,j)\notin E(G)\\
	\end{cases} $$ 	
\end{Defn}
\begin{example}\label{Exap}
	A graph $ G $ and its corresponding $ \mathcal{NM} (G)$ representation are given in \normalfont{Figure}~\ref{Figure.1}. In this example, the neighbourhood set of each vertex of $ G $ is given by $N_{G}(1)$= \{2,6\}, 
	$N_{G}(2)$=\{1,5\}, $N_{G}(3)$=\{4\}, $N_{G}(4)$=\{3,5\}, $N_{G}(5)$=\{2,4,6,7\}, $N_{G}(6)$=\{1,5,7\}, $N_{G}(7)$=\{5,6\}.
	
	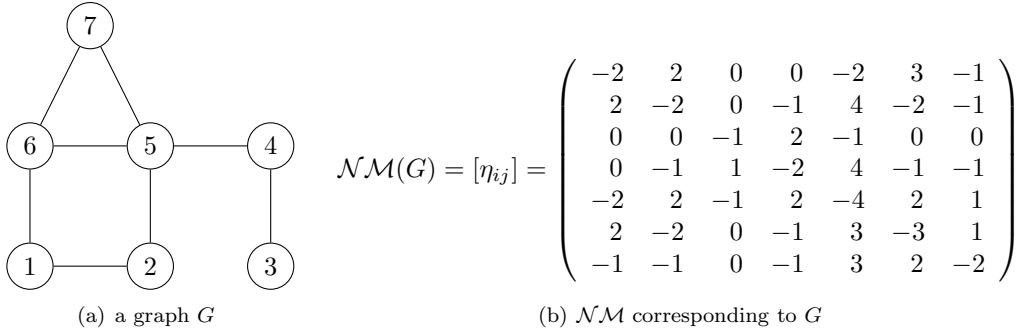
\begin{figure}[ht!]
		\centering 
		\subfigure[a graph $ G $]{
			\begin{tikzpicture}[scale=0.8,auto=left,every node/.style={draw,circle}] 
			\node (n1) at (0,0)  {1};
			\node (n2) at (2,0)  {2};  
			\node (n3) at (4,0)  {3};
			\node (n4) at (4,2)  {4};
			\node (n5) at (2,2)  {5};
			\node (n6) at (0,2)  {6};
			\node (n7) at (1,4)  {7};
			\foreach \from/\to in {n1/n2, n1/n6,n2/n5,n3/n4, n4/n5, n5/n6, n5/n7, n6/n7}
			\draw (\from) -- (\to)      ;
			\label{Figure.1a}  \end{tikzpicture}} 
		\quad  \subfigure[$ \mathcal{NM} $ corresponding to $G$]
		{\begin{tikzpicture} \label{Figure.1b}  \node (0,2){  $ \mathcal{NM}(G)  = [\eta_{ij}]=\left( \begin{array}{c c c c c c c}  
				-2&     ~~2&     ~~0&     ~~0&    -2&     ~~3&    -1\\
				~~2&    -2&     ~~0&    -1&     ~~4&    -2&        -1\\
				~~0&     ~~0&    -1&     ~~2&    -1&     ~~0& ~~0\\
				~~0&    -1&     ~~1&    -2&     ~~4&    -1&      -1\\
				-2&     ~~2&    -1&     ~~2&    -4&     ~~2&     ~~1\\
				~~2&    -2&     ~~0&    -1&     ~~3&    -3&     ~~1\\
				-1&    -1&     ~~0&    -1&     ~~3&     ~~2&    -2\\
				\end{array} \right) $}; \end{tikzpicture}} 
		\caption{A graph $G$ and its $ \mathcal{NM}(G) $.}
		\label{Figure.1}
	\end{figure}
\end{example}
%\ref{Figure.1a} \ref{Exap} \ref{Figure.1b}
\begin{proposition}\label{prop.1}
	The $ \mathcal{NM}(G) $ can also be defined by using the product of adjacency matrix and Laplacian matrix of a graph $ G $. 
\end{proposition}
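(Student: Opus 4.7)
The plan is to show that $\mathcal{NM}(G) = A\cdot C = A(D-A) = AD - A^{2}$, and then verify this entry-wise against the three-case definition of $\eta_{ij}$. Since the building blocks $AD$ and $A^{2}$ already have clean combinatorial descriptions (from the definitions given above), the proof reduces to a case analysis on the diagonal, the edge entries, and the non-edge off-diagonal entries.

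First I would expand the product using the definition of the Laplacian, writing $AC = AD - A^{2}$, and then compute each case. For $i = j$, the $AD$ contribution is $0$ (since the diagonal of $AD$ vanishes) and $(A^{2})_{ii} = |N_{G}(i)|$, giving $(AC)_{ii} = -|N_{G}(i)|$, matching the diagonal case of $\eta_{ij}$. For $(i,j)\in E(G)$, I have $(AD)_{ij} = |N_{G}(j)|$ and $(A^{2})_{ij} = |N_{G}(i)\cap N_{G}(j)|$, so $(AC)_{ij} = |N_{G}(j)| - |N_{G}(i)\cap N_{G}(j)|$. Finally, for $(i,j)\notin E(G)$ with $i\neq j$, both the adjacency contribution in $AD$ is zero and one gets $(AC)_{ij} = -|N_{G}(i)\cap N_{G}(j)|$, again matching the definition of $\eta_{ij}$.

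The only step that requires a small combinatorial observation, and is arguably the main obstacle, is the edge case: one needs the identity
\[
|N_{G}(j)| - |N_{G}(i)\cap N_{G}(j)| = |N_{G}(j)\setminus N_{G}(i)|,
\]
which follows directly from the inclusion $N_{G}(i)\cap N_{G}(j) \subseteq N_{G}(j)$ (equivalently, from the decomposition $N_{G}(j) = (N_{G}(j)\cap N_{G}(i)) \sqcup (N_{G}(j)\setminus N_{G}(i))$). With this identity, the edge entries of $AC$ coincide with $|N_{G}(j) - N_{G}(i)|$ as required.

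Putting the three cases together yields $\mathcal{NM}(G) = A\,C = A\,(D - A)$, establishing the proposition. I would conclude by noting the immediate dual statement that, because $C$ is symmetric, $\mathcal{NM}(G)^{T} = C\,A$, so the product of the Laplacian and the adjacency matrix (in the opposite order) produces the transpose of $\mathcal{NM}(G)$; this dovetails with Remark on $AD^{T} = DA$ and explains the asymmetry of $\mathcal{NM}(G)$ for non-regular graphs.
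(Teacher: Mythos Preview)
Your argument is correct and follows essentially the same route as the paper: expand $A\,C = A(D-A) = AD - A^{2}$ and check the three cases against the definitions of $AD$ and $A^{2}$, with the set identity $|N_{G}(j)| - |N_{G}(i)\cap N_{G}(j)| = |N_{G}(j)\setminus N_{G}(i)|$ handling the edge case. Your closing remark about $\mathcal{NM}(G)^{T} = C\,A$ is also exactly what the paper records separately as the matrix $\mathcal{MN}(G)$.
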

\begin{proof} Consider the definition of product of two matrices 
	\begin{eqnarray*}
		A \times C(G) &= & A \times (D(G)-A)\\
		&   =&  AD-A^2\\
		&   =& [ad_{ij}]-[a^2_{ij}]\\		
		&   = &\begin{cases}
			0-|N_{G}(i)|,  &  \text{ if }  i=j\\
			|N_{G}(j)-N_{G}(i)|,  &    \text{ if }  (i,j)\in E(G)\\
			0-|N_{G}(i) \cap N_{G}(j)|,  & \text{ if }    (i,j)\notin E(G)\\
		\end{cases}
	\end{eqnarray*} 
	Note that the last equality represents the  $\mathcal{NM}(G)$. Hence the proof. \end{proof}

\begin{proposition}
	Given a graph $ G $, the $ \mathcal{NM}(G) $ can be obtained from adjacency matrix and vice versa. 
\end{proposition}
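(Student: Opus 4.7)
The plan is to prove the equivalence in two directions, treating it as an information-recovery question for each matrix.

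For the forward direction, $A \Rightarrow \mathcal{NM}(G)$, I would simply exploit Proposition~\ref{prop.1}. From the adjacency matrix $A$, the degree of vertex $i$ is the $i$-th row sum of $A$, which gives the diagonal matrix $D$. Then $C(G) = D - A$ is determined, and Proposition~\ref{prop.1} yields $\mathcal{NM}(G) = A \cdot C(G)$. This direction is essentially bookkeeping.

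For the reverse direction, $\mathcal{NM}(G) \Rightarrow A$, I would read off the degrees directly from the diagonal (since $\eta_{ii} = -|N_G(i)|$) and then argue that the sign of each off-diagonal entry determines adjacency. The key observation to establish is: for $i \neq j$, the entry $\eta_{ij}$ is strictly positive if and only if $(i,j) \in E(G)$. One implication is immediate from the definition: if $(i,j) \notin E(G)$, then $\eta_{ij} = -|N_G(i) \cap N_G(j)| \le 0$. The other implication uses the simple-graph hypothesis: if $(i,j) \in E(G)$, then $i \in N_G(j)$ but $i \notin N_G(i)$ (no loops), so $i \in N_G(j) \setminus N_G(i)$, forcing $\eta_{ij} = |N_G(j) \setminus N_G(i)| \ge 1$. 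Consequently, one recovers $A$ by setting $a_{ij} = 1$ whenever $\eta_{ij} > 0$ and $a_{ij} = 0$ otherwise.

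The main subtlety, and the only place where the argument could go wrong, is the case $\eta_{ij} = 0$ for $i \neq j$. A priori this could arise from an edge with $N_G(j) \subseteq N_G(i)$ or from a non-edge with disjoint neighbourhoods, which would spoil the sign test. The observation above rules out the first possibility precisely because the graph has no self-loops, so $\eta_{ij} = 0$ unambiguously signals a non-edge. Once this point is secured, both directions of the correspondence follow, and I would conclude by noting that the bijection is algorithmic in both directions.
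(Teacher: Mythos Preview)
Your proposal is correct and follows essentially the same route as the paper: invoke Proposition~\ref{prop.1} for the forward direction, and for the reverse direction recover $A$ via the rule $a_{ij}=1 \iff \eta_{ij}>0$. In fact you are more careful than the paper on the one genuine subtlety: the paper simply asserts the sign dichotomy, whereas you explicitly justify why an edge forces $\eta_{ij}\ge 1$ (via $i\in N_G(j)\setminus N_G(i)$ in a loopless graph), thereby ruling out the ambiguous $\eta_{ij}=0$ case for edges.
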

\begin{proof} By \textnormal{Proposition \ref{prop.1}}, it is immediate that the matrix $ \mathcal{NM}(G) $ can be constructed from the adjacency matrix. 
	
	Given the $ \mathcal{NM}(G) $, if $ i\neq j, \eta_{ij}>0 $ implies that by the definition of $ \eta_{ij}=|N_{G}(j)-N_{G}(i)| $ for $ (i,j)\in E(G)   $ and if $ \eta_{ij}\leq 0 $ this implies either $ i=j $ or $ (i,j)\notin E(G) $.
	
	Therefore, we can now define $ a_{ij}=\begin{cases}
	1,  &  \text{ if } \eta_{ij}>0\\
	0, & Otherwise
	\end{cases} $
\end{proof}

\begin{example}
	From the $ \mathcal{NM}(G) $ in {Figure \ref{Figure.1b}}, constructing the adjacency matrix as defined in the above proposition, we get,
	\begin{figure}[ht!] \scriptsize 
		\centering 
		$  A = [a_{ij}]=\left( \begin{array}{c c c c c c c}          
		0&     ~~1&     ~~0&     ~~0&     ~~0&     ~~1&     ~~0\\
		1&     ~~0&     ~~0&     ~~0&     ~~1&     ~~0&     ~~0\\
		0&     ~~0&     ~~0&     ~~1&     ~~0&     ~~0&     ~~0\\
		0&     ~~0&     ~~1&     ~~0&     ~~1&     ~~0&     ~~0\\
		0&     ~~1&     ~~0&     ~~1&     ~~0&     ~~1&     ~~1\\
		1&     ~~0&     ~~0&     ~~0&     ~~1&     ~~0&     ~~1\\
		0&     ~~0&     ~~0&     ~~0&     ~~1&     ~~1&     ~~0 \\
		\end{array} \right) $
		\caption{Adjacency matrix of $G$ constructed from $\mathcal{NM}(G)$}\label{Figure.3}
	\end{figure} 
	
	It is immediate that $ A $ is the required adjacency matrix.
\end{example}

An alternative interpretation or a way of defining the $ \mathcal{NM}(G) $ is to consider the breadth first traversal starting at a vertex $i$. By inspection of the first two levels in this level decomposition, we can obtain the respective $ i^{th} $ row of the $ \mathcal{NM}(G) $. We prove this equivalence in the following proposition.

\begin{proposition}\label{prop.2}
	Given a graph $ G $, the entries of any row of an $ \mathcal{NM}(G)$ corresponds to the subgraph with vertices from the first two levels of level decomposition of the graph rooted at the given vertex with edges connecting the vertices in different levels.
\end{proposition}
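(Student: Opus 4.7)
My plan is to fix an arbitrary vertex $i$, let $L_0=\{i\}$, $L_1=N_G(i)$, and $L_2$ be the set of vertices at graph-distance exactly $2$ from $i$, and then show that the entries $\eta_{ij}$ in the $i$th row of $\mathcal{NM}(G)$ are fully determined by — and in turn fully encode — the subgraph $H$ on $L_0\cup L_1\cup L_2$ whose edges are exactly those running between different levels. The proof will proceed by cases on the position of $j$ in this BFS level decomposition from $i$, matching each case in the definition of $\eta_{ij}$ with an edge-count in $H$.

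The first step is the cutoff argument: I will verify that $\eta_{ij}=0$ whenever $j\notin L_0\cup L_1\cup L_2$. Such a $j$ satisfies $j\neq i$ and $(i,j)\notin E(G)$, so by definition $\eta_{ij}=-|N_G(i)\cap N_G(j)|$; but any common neighbour $v$ would certify $d(i,j)\le 2$, contradicting $j\notin L_0\cup L_1\cup L_2$. Hence only vertices from the first two levels (together with the source) contribute nonzero row entries, establishing that the row "lives on" the vertex set of $H$.

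Next I will interpret each nonzero entry as a count inside $H$. For $j=i\in L_0$, the value $\eta_{ii}=-|N_G(i)|$ is (up to sign) the number of edges in $H$ from $L_0$ to $L_1$, i.e.\ the size of $L_1$. For $j\in L_1$, write $N_G(j)=\{i\}\sqcup(N_G(j)\cap L_1)\sqcup(N_G(j)\cap L_2)$; then $\eta_{ij}=|N_G(j)-N_G(i)|$ counts precisely the neighbours of $j$ lying outside $L_1$, which is the number of edges incident to $j$ in $H$ that cross between levels (one edge to $L_0$ plus the edges going up to $L_2$). For $j\in L_2$, the value $\eta_{ij}=-|N_G(i)\cap N_G(j)|$ is, up to sign, the number of edges in $H$ from $j$ down to $L_1$, since every common neighbour of $i$ and $j$ lies in $L_1$ and these are exactly the $j$-incident cross-level edges in $H$.

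Combining the three cases shows that reading row $i$ of $\mathcal{NM}(G)$ recovers, for each vertex $w$ of $H$, the number of cross-level edges of $H$ incident to $w$ (with the sign convention distinguishing whether $w$ lies in $L_0\cup L_2$ from $w\in L_1$); conversely, the subgraph $H$ determines these counts and hence the row. The main obstacle I anticipate is not any single case but a cleanly stated formalisation of "corresponds to" — in particular, making sure to handle the source $i\in N_G(j)\setminus N_G(i)$ for $j\in L_1$ so that it is correctly absorbed into the cross-level edge count, and to argue the distance-$\ge 3$ vanishing explicitly so that the description really is confined to the first two BFS levels.
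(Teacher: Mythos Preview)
Your proposal is correct and follows essentially the same approach as the paper: a case analysis on whether $j$ lies in $L_0$, $L_1$, $L_2$, or farther away, interpreting each entry $\eta_{ij}$ as a signed count of cross-level edges incident to $j$ in the BFS subgraph, together with the vanishing for $d(i,j)\ge 3$. Your treatment is in fact a little more careful than the paper's --- you explicitly justify the decomposition $N_G(j)=\{i\}\sqcup(N_G(j)\cap L_1)\sqcup(N_G(j)\cap L_2)$ for $j\in L_1$ and the role of the source vertex in ensuring $\eta_{ij}\ge 1$ there --- but the underlying argument is the same.
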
 

\begin{proof}
	Consider any $i ^{th} $ row of the $ \mathcal{NM}(G)$. By the definition of $ \mathcal{NM}(G) $, vertex $ i $ is adjacent to a vertex $ j \iff \eta_{ij}>0 $. This gives us the neighbours of $ i $, namely $ N_{G}(i) $, or the first level of the level decomposition. From the following observations, we obtain the vertices that lie in the next level. 
	\begin{enumerate}
		\item The diagonal entries are always negative and in particular, if $ \eta_{ii}=-c $, then the degree of the vertex is $ c $ and that there will be exactly $ c $ positive entries in that row.
		\item For some positive integer $ c $, if $ \eta_{ij}=c $ then $ j\in N_{G}(i) $ and that there exists $ c-1 $ vertices are adjacent to $ i $ and at distance $ 2 $ from $ i $ through $ j $.
		\item If $ \eta_{ij}=-c $, then the vertex $ j $ belongs to the second level of the decomposition and moreover, there exists $ c $ paths of length two from vertex $ i $ to $ j $. In other words, there exist $ c $ common neighbours between vertex $ i $ and $ j $.
		\item If an entry, $ \eta_{ij}=0 $ then the distance between vertex $ i $ and $ j $ is at least $ 3 $ or the vertex $ j $ is isolated
	\end{enumerate}
	Combining these observations, one can easily obtain the subgraph with vertices from the first two levels of decomposition of $  G  $ rooted at the vertex $ i $.\\
	On the other hand, from the Breadth first traversal tree rooted at a vertex $ i $ and the definition of $ \mathcal{NM}(G) $ we can immediately write the corresponding $ i^{th} $ row entry by examining the vertices and their position in the first two levels.
\end{proof}

%\begin{example}
%From the \textnormal{Figure \ref{Figure.1b}}  we obtain the following
% \begin{enumerate}
%\item $\eta_{11}=-2$ denotes there exist two neighbours of vertex $ 1 $.
%\item $\eta_{12}=2$(positive entry) denotes there is an adjacency between vertex $ 1 $ and vertex $ 2 $ and there exist one vertex (vertex $ 5 $) at distance $ 2 $ from vertex $ 1 $ through vertex $ 2 $.
%\item $ \eta_{13}=0 $ denotes that vertices $ 1 $ and $ 3 $ are at distance at least $ 3 $. 
%\item $\eta_{15}=-2$ (negative entry) denotes that there exist no adjacency between vertex $ 1 $ and $ 5 $ and there exist one vertex which is a common neighbour of vertex $ 1 $ and vertex $ 5 $. 
%\end{enumerate}
%\end{example}
%

Analogous to $ \mathcal{NM}(G) $ we can also define the product matrix $ \mathcal{MN}(G) $ as follows.

\begin{Defn}
	The product of Laplacian matrix and adjacency matrix denoted by $ \mathcal{MN}=[\eta'_{ij}] $ is defined as 
	$$\eta'_{ij}=\begin{cases}
	-|N_{G}(i)| , &  \text{ if }  i=j\\
	|N_{G}(i)-N_{G}(j)|,  &    \text{ if }    (i,j)\in E(G)\\
	-|N_{G}(i) \cap N_{G}(j)| , & \text{ if }    (i,j)\notin E(G)\\
	\end{cases} $$ \\
\end{Defn}

\begin{remark}
	Note that $ \mathcal{MN}(G) $ can be obtained by $ C\times A=DA-A^{2} $. 
\end{remark}

\begin{remark}\label{Remark.1}
	For an undirected simple graph \begin{eqnarray*}
		( \mathcal{NM} )'	&=& (A\times C)'\\
		&=& C' \times A' \\
		&=& C\times A\\
		&=& \mathcal{MN}
	\end{eqnarray*}
\end{remark}

\begin{proposition}
	The $ \mathcal{NM} $ matrix is a singular matrix 
\end{proposition}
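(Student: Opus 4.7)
The plan is to exploit Proposition \ref{prop.1}, which identifies $\mathcal{NM}(G)$ with the matrix product $A \cdot C(G)$. Singularity of a product follows from singularity of either factor, so I will show that the Laplacian $C(G)$ is itself singular, and that this singularity survives left-multiplication by $A$.

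First I would observe that the Laplacian $C(G) = D(G) - A$ has every row summing to zero. Indeed, for any vertex $i$, the diagonal entry contributes $|N_G(i)|$, while the off-diagonal entries in row $i$ contribute $-1$ for each of the $|N_G(i)|$ neighbours of $i$, and zero elsewhere, giving a net row sum of $|N_G(i)| - |N_G(i)| = 0$. Equivalently, if $\mathbf{1}$ denotes the all-ones column vector of length $|V(G)|$, then $C(G)\,\mathbf{1} = \mathbf{0}$.

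Next I would combine this with Proposition \ref{prop.1} to conclude
\begin{equation*}
\mathcal{NM}(G)\,\mathbf{1} \;=\; (A \cdot C(G))\,\mathbf{1} \;=\; A \cdot \bigl(C(G)\,\mathbf{1}\bigr) \;=\; A \cdot \mathbf{0} \;=\; \mathbf{0}.
\end{equation*}
Thus $\mathbf{1}$ is a nonzero vector in the kernel of $\mathcal{NM}(G)$, which forces $\det \mathcal{NM}(G) = 0$ and so $\mathcal{NM}(G)$ is singular.

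There is essentially no obstacle here once Proposition \ref{prop.1} is in hand; the entire content is the well-known fact that $\mathbf{1}$ lies in the kernel of every graph Laplacian. If one wanted a proof that does not invoke Proposition \ref{prop.1}, one could alternatively verify directly from the definition of $\eta_{ij}$ that each row of $\mathcal{NM}(G)$ sums to zero by splitting the sum into contributions from $j=i$, from $j \in N_G(i)$ (where $|N_G(j) - N_G(i)|$ counts neighbours of $j$ outside $N_G(i)$), and from $j \notin N_G[i]$ (where $-|N_G(i) \cap N_G(j)|$ counts common neighbours), and then checking that these telescope to cancel the diagonal $-|N_G(i)|$; but the factored form above is cleaner and is the route I would take.
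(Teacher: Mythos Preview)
Your proof is correct and follows essentially the same route as the paper: both invoke Proposition~\ref{prop.1} to write $\mathcal{NM}(G)=A\cdot C(G)$ and then use the well-known singularity of the Laplacian. The only cosmetic difference is that the paper concludes via $\det(A\cdot C)=\det(A)\det(C)=0$, whereas you exhibit the explicit kernel vector $\mathbf{1}$; your version has the small bonus of recovering Proposition~\ref{Row sum zero} (row sums of $\mathcal{NM}(G)$ vanish) for free.
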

\begin{proof} 
	Let $ A $ be an adjacency matrix and $ C(G) $ be the Laplacian  matrix. 
	
	It is enough to prove $det( \mathcal{NM} )=0$. Since \begin{eqnarray*}
		det( \mathcal{NM} )&= & det(A\times C)\\
		& = & det(A)\times det(C)\\
		&=& 0.
	\end{eqnarray*} 
	Since it is well know that, $det(C)=0$ we get the last equality and hence the claim. 
\end{proof}

\begin{proposition} \label{Row sum zero}
	Row sum of $ \mathcal{NM}(G) $ is zero.
\end{proposition}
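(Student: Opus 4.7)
The plan is to reduce the statement to the well-known fact that the Laplacian $C(G)$ has row sums equal to zero, via the factorization $\mathcal{NM}(G)=A\cdot C(G)$ that was just established in Proposition~\ref{prop.1}.

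First I would let $\mathbf{1}$ denote the all-ones column vector of the appropriate size, and observe that proving that every row of $\mathcal{NM}(G)$ sums to zero is exactly the assertion that $\mathcal{NM}(G)\,\mathbf{1}=\mathbf{0}$. Using Proposition~\ref{prop.1}, this rewrites as $A\,C(G)\,\mathbf{1}=\mathbf{0}$. Since $C(G)=D(G)-A$, the vector $C(G)\,\mathbf{1}$ has $i$-th entry $\deg(i)-\sum_j a_{ij}=\deg(i)-\deg(i)=0$, so $C(G)\,\mathbf{1}=\mathbf{0}$, and therefore $A\,C(G)\,\mathbf{1}=A\,\mathbf{0}=\mathbf{0}$. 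That yields the conclusion in one line.

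If a more self-contained, combinatorial argument is preferred, I would instead verify the identity directly from the three-case definition of $\eta_{ij}$. The $i$-th row sum splits as
\[
\sum_{j} \eta_{ij} \;=\; -|N_G(i)| \;+\; \sum_{j\in N_G(i)} |N_G(j)\setminus N_G(i)| \;-\; \sum_{\substack{j\neq i\\ j\notin N_G(i)}} |N_G(i)\cap N_G(j)|.
\]
Using the identity $|N_G(j)\setminus N_G(i)|=|N_G(j)|-|N_G(j)\cap N_G(i)|$ in the first sum, and then swapping the order of summation in both sums (counting, for each pair $(j,k)$ with $k\in N_G(i)$, how many times $k$ appears in $N_G(j)$), the positive and negative contributions cancel against $-|N_G(i)|$.

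I do not expect any genuine obstacle: both approaches are short, and the algebraic route via $C(G)\,\mathbf{1}=\mathbf{0}$ is entirely mechanical once Proposition~\ref{prop.1} is in hand. The only point requiring mild care in the combinatorial version is the bookkeeping in the double sum, making sure the index $j=i$ and the diagonal term $\eta_{ii}$ are handled consistently with the $i\neq j$ cases.
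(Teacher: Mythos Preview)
Your algebraic argument via $\mathcal{NM}(G)\,\mathbf{1}=A\,C(G)\,\mathbf{1}=A\,\mathbf{0}=\mathbf{0}$ is correct and is in fact cleaner than the paper's own proof. The paper does not exploit Proposition~\ref{prop.1} here; instead it writes out the row sum combinatorially (essentially your second approach), absorbs the diagonal term $-|N_G(i)|$ into the first summation as $\sum_{j\in N_G(i)}(|N_G(j)\setminus N_G(i)|-1)$, and then interprets both remaining sums via the level decomposition rooted at $i$: the first counts edges going from level~1 to level~2, the second counts the same edges seen from level~2 back to level~1, hence they cancel. Your route is shorter and uses only linear algebra already on the page; the paper's route has the mild advantage of reinforcing the level-decomposition picture of $\mathcal{NM}(G)$ developed in Proposition~\ref{prop.2} and Example~\ref{Exap.2}, and it is reused verbatim in the proof of the column-sum proposition that follows.
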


\begin{proof} Consider any $i^{th}$ row in $ \mathcal{NM}(G) $
	\begin{equation}
	\sum\limits_{j=1}^n \eta_{ij}=\sum\limits _{j\in N_{G}(i)}|N_{G}(j)-N_{G}(i)|-|N_{G}(i)|-\sum\limits_{j\notin N_{G}[i]}|N_{G}(i) \cap N_{G}(j)|
	\label{eqno}
	\end{equation}
	
	\begin{equation}\label{Row sum EQ.1}
	\sum\limits_{j=1}^n \eta_{ij} =\sum\limits_{j\in N_{G}(i)}(|N_{G}(j)-N_{G}(i)|-1) -\sum\limits_{j\notin N_{G}[i]} |N_{G}(i)\cap N_{G}(j)|
	\end{equation}
	Consider the level decomposition of the graph $ G $ from the vertex $ i $.
	
	Observe that, $\sum\limits_{j\in N_{G}(i)}(|N_{G}(j)-N_{G}(i)|-1)$ is the number of edges connecting the vertices from level $ 1 $ to level $ 2 $. Similarly, $\sum\limits_{j\notin N_{G}[i]}|N_{G}(i)\cap N_{G}(j)|$ denote the edges connecting the vertices from level $ 2 $ to level $ 1 $. So, we have
	
	\begin{equation} \label{Row sum EQ.2}
	\sum\limits_{j\in N_{G}(i)}(|N_{G}(j)-N_{G}(i)|-1)=\sum\limits_{j\notin N_{G}[i]}|N_{G}(i)\cap N_{G}(j)|
	\end{equation}
	Substitute the {equation} (\ref{Row sum EQ.2}) in {equation} (\ref{Row sum EQ.1}) we get the row sum of $ \mathcal{NM}(G) $ is zero.
\end{proof}

\begin{remark}
	Suppose any row of $ \mathcal{NM}(G) $ is given, the degree of the vertex the row represents can be obtained from the minimum value of that row. By considering this position as the diagonal position of the row (since $\eta_{ii} = -|N_{G}(i)|$), hence enables us to identify the vertex that it represent. 
\end{remark}

\begin{example} \label{Exap.2} Consider the row given by \textnormal{Figure \ref{Figure4a}} from the \textnormal{Example \ref{Exap}}, we see that the minimum value is $-4$ occurring at $ 5^{th} $ position of the row, tells us that the row represents vertex $ 5 $ in the example.
	\begin{figure}[ht!] 
		\centering 
		\subfigure[$i^{th} $ row of $ \mathcal{NM} $ matrix]{\label{Figure4a}
			$  	[a_{ij}]=\left( \begin{array}{c c c c c c c}             -2&     ~~2&    -1&     ~~2&    -4&     ~~2&     1     
			\end{array} \right) $ } 
		\quad  \subfigure[A subgraph corresponding to the row matrix ]
		{\label{Figure4b} \begin{tikzpicture}[scale=0.4,auto=left,every node/.style={draw,circle}]
			[scale=.25,auto=left,every node/.style={draw,circle}]           
			\node (n1) at (2,2)  {1};
			\node (n2) at (8,6)  {2};  
			\node (n3) at (7,2)  {3};
			\node (n4) at (6,6)  {4};
			\node (n5) at (5,10)  {5};
			\node (n6) at (4,6)  {6};
			\node (n7) at (2,6)  {7};
			
			\foreach \from/\to in {n5/n2,n5/n4,n5/n6,n5/n7,n2/n1,n4/n3,n6/n1}  
			\draw (\from) -- (\to);    
			`            \end{tikzpicture}} 
		\caption{ Row matrix of $ \mathcal{NM}(G) $ get from Example \ref{Exap} and its graph representation  }\label{Figure4}   
	\end{figure}
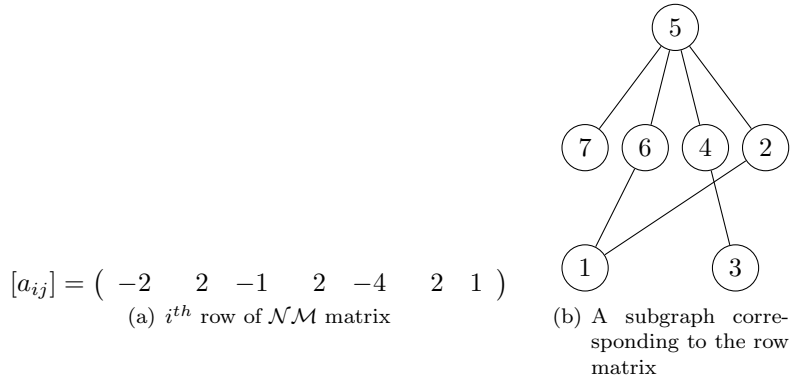    
	In addition, using the row entries and the two level decomposition, we can construct the subgraph with vertices from the first two levels of the level decomposition of the graph rooted at vertex $i$. The  \textnormal{Figure \ref{Figure4b}} shows the constructed subgraph rooted at vertex 5 by using the corresponding row entries. Further from \textnormal{Figure \ref{Figure4a}}, we also get the row sum of $ \mathcal{NM}(G) $ is zero. 
\end{example}  

\begin{proposition}
	For any $ 1\leq i\leq n $, the $ i^{th} $ column sum  of $ \mathcal{NM}(G) $ is equal to  
	
	$\sum \limits_{j\in N_{G}(i)}\Bigg(|N_{G}(i)|-|N_{G}(j)|\Bigg).$
\end{proposition}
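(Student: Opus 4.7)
The plan is to reduce the statement to a short matrix calculation by invoking \textnormal{Proposition \ref{prop.1}}, which identifies $\mathcal{NM}(G)$ with the product $A \cdot C$, where $C = D - A$. The $i^{th}$ column sum of any matrix $M$ is obtained by left-multiplying $M$ by the all-ones row vector $\mathbf{1}^T$ and reading off the $i^{th}$ entry, so the target quantity equals the $i^{th}$ entry of $\mathbf{1}^T A C$.

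First I would observe that $\mathbf{1}^T A = d^T$, where $d$ is the degree vector with $d_k = |N_G(k)|$, since the $k^{th}$ column sum of the adjacency matrix counts the neighbours of vertex $k$. Next I would expand $d^T C = d^T D - d^T A$ and evaluate the two summands separately at coordinate $i$. Because $D$ is diagonal, $(d^T D)_i = d_i \cdot D_{ii} = |N_G(i)|^2$. For the second piece, $(d^T A)_i = \sum_{k} d_k A_{ki} = \sum_{k \in N_G(i)} |N_G(k)|$, using symmetry of $A$. Combining these gives the column sum as
\begin{equation*}
|N_G(i)|^2 - \sum_{k \in N_G(i)} |N_G(k)| \;=\; \sum_{j \in N_G(i)} \bigl(|N_G(i)| - |N_G(j)|\bigr),
\end{equation*}
which is exactly the claimed expression.

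The only conceptual work is bookkeeping: making sure that column sums (as opposed to row sums, which were handled in \textnormal{Proposition \ref{Row sum zero}}) are read off via $\mathbf{1}^T M$ rather than $M \mathbf{1}$, and that the symmetry $A = A^T$ is used when rewriting $(d^T A)_i$ as a sum over $N_G(i)$. No combinatorial inclusion–exclusion or level-decomposition argument is needed, in contrast to the row-sum proof, so I do not anticipate any genuine obstacle; the main care is simply in distinguishing $AD$ from $DA$ and keeping orientations straight. As a sanity check, one could also verify the formula directly from the piecewise definition of $\eta_{ki}$ by splitting the sum into the diagonal term $-|N_G(i)|$, the neighbour terms $|N_G(i) - N_G(k)| = |N_G(i)| - |N_G(i) \cap N_G(k)|$ for $k \in N_G(i)$, and the non-neighbour terms $-|N_G(i) \cap N_G(k)|$ for $k \notin N_G[i]$, and then using the identity $\sum_{k \neq i} |N_G(i) \cap N_G(k)| = \sum_{j \in N_G(i)} |N_G(j)| - |N_G(i)|$ obtained by swapping the order of summation; this gives the same conclusion and confirms the matrix-algebra derivation.
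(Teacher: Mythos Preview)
Your argument is correct. The matrix identity $\mathbf{1}^T(AC) = d^T(D-A)$ does the job in two lines, and your coordinate evaluations $(d^TD)_i = |N_G(i)|^2$ and $(d^TA)_i = \sum_{k\in N_G(i)}|N_G(k)|$ are accurate.

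The paper takes a different route. It first invokes the transpose relation $(\mathcal{NM})^T = \mathcal{MN}$ (Remark~\ref{Remark.1}) to convert the $i^{th}$ column sum of $\mathcal{NM}$ into the $i^{th}$ row sum of $\mathcal{MN}$, then expands that row sum using the piecewise definition of $\eta'_{ij}$, and finally appeals to Proposition~\ref{Row sum zero} to collapse the combination $-\sum_{j\in N_G(i)}|N_G(i)\cap N_G(j)|-\sum_{j\notin N_G[i]}|N_G(i)\cap N_G(j)|-|N_G(i)|$ into $-\sum_{j\in N_G(i)}|N_G(j)|$. Your ``sanity check'' paragraph is essentially this computation, minus the detour through the transpose. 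What your main matrix-algebra proof buys is independence from both Remark~\ref{Remark.1} and Proposition~\ref{Row sum zero}: it needs only $\mathcal{NM}=AC$ and the symmetry of $A$. What the paper's approach buys is a visible link back to the row-sum result, making the column-sum formula look like a corollary of the earlier machinery rather than a standalone calculation.
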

\begin{proof}
	By Remark \ref{Remark.1}, we have $(\mathcal{MN})'= \mathcal{NM} $. This implies the column sum of $ \mathcal{NM} $ matrix is equal to the row sum of $ \mathcal{MN} $ matrix. Therefore, we get
	\begin{eqnarray*}
		\tiny \sum_{j=1}^n \eta_{ji} = \sum_{j=1}^n \eta'_{ij} & = &\sum \limits_{j\in N_{G}(i)}|N_{G}(i)-N_{G}(j)|-\sum \limits_{j\notin N{[i]}}|N_{G}(i) \cap N_{G}(j)|-|N_{G}(i)|\\
		&= &\sum \limits_{j\in N_{G}(i)}|N_{G}(i)|-\sum \limits_{j\in N_{G}(i)}|N_{G}(i) \cap N_{G}(j)|-\sum \limits_{j\notin N{[i]}}|N_{G}(i) \cap N_{G}(j)|-|N_{G}(i)|\\
		%&=&\sum \limits_{j\in N_{G}(i)}|N_{G}(i)|-\sum \limits_{j\in N_{G}(i)}|N_{G}(i) \cap N_{G}(j)|-\Bigg(\sum \limits_{j\in N_{G}(i)}|N_{G}(j)|-\sum \limits_{j\in N_{G}(i)}|N_{G}(i) \cap N_{G}(j)|\Bigg)  (\sc \textnormal{ Proposition }  \ref{Row sum zero})\\
		&=&\sum \limits_{j\in N_{G}(i)}\Bigg( |N_{G}(i)|-|N_{G}(j)|\Bigg)(\textnormal{by Proposition }  \ref{Row sum zero}) 
	\end{eqnarray*}
	Hence the proof. 
\end{proof}

\section{Graph characterization using neighbourhood matrix $ \mathcal{NM}(G) $}
Note that the matrix $ \mathcal{NM} (G)$ is not always symmetric. The next result characterizes the graphs for which $ \mathcal{NM}(G) $ will be symmetric.
\begin{proposition}
	The $ \mathcal{NM}(G) $  is symmetric if and only if the graph $ G $ is either regular or contains regular components.
\end{proposition}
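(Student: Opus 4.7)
The plan is to reduce symmetry of $\mathcal{NM}(G)$ to a condition on degrees of adjacent vertices, and then identify that condition with each connected component being regular.

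First I would analyze symmetry entry-by-entry. The diagonal entries $\eta_{ii}$ give nothing to check, and the non-edge entries satisfy $\eta_{ij}=-|N_G(i)\cap N_G(j)|=-|N_G(j)\cap N_G(i)|=\eta_{ji}$ automatically. Thus $\mathcal{NM}(G)$ is symmetric if and only if $\eta_{ij}=\eta_{ji}$ holds for every edge $(i,j)\in E(G)$, i.e. $|N_G(j)-N_G(i)|=|N_G(i)-N_G(j)|$.

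Next I would use the elementary identity $|N_G(j)-N_G(i)| = |N_G(j)|-|N_G(i)\cap N_G(j)|$ (and symmetrically for $|N_G(i)-N_G(j)|$). Subtracting, the intersection terms cancel, so the edge-wise condition $|N_G(j)-N_G(i)|=|N_G(i)-N_G(j)|$ collapses to $|N_G(i)|=|N_G(j)|$, i.e.\ $\deg(i)=\deg(j)$. Hence $\mathcal{NM}(G)$ is symmetric if and only if every pair of adjacent vertices has the same degree.

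Finally, I would translate this edge-wise degree condition into the global description in the statement. If $G$ is regular, or every component of $G$ is regular, then adjacent vertices trivially have equal degree. Conversely, assume $\deg(i)=\deg(j)$ for every edge $(i,j)$. Fix a connected component $H$ of $G$ and any two vertices $u,v\in H$; walking along a $u$--$v$ path and applying the edge-wise equality at each step shows $\deg(u)=\deg(v)$, so $H$ is regular. This is in perfect agreement with Remark \ref{Remark.2}, which already noted that on regular (component-)graphs $AD$ is symmetric, and then the representation $\mathcal{NM}(G)=AD-A^2$ from Proposition \ref{prop.1} together with the symmetry of $A^2$ gives the claim cleanly.

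I do not expect a serious obstacle here; the only subtlety is to remember that the hypothesis must be restricted to edges (not arbitrary pairs), since only edge entries of $\mathcal{NM}(G)$ are directionally defined. The key step is the set-difference identity that reduces the edge condition to an equality of degrees, after which the path-connectedness argument in each component completes the proof.
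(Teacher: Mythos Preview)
Your proposal is correct. The forward direction is essentially identical to the paper's: both expand $|N_G(j)-N_G(i)|=|N_G(j)|-|N_G(i)\cap N_G(j)|$ on edges and observe that non-edge entries are automatically symmetric.

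The converse differs. The paper routes through the decomposition $\mathcal{NM}(G)=AD-A^2$ from Proposition~\ref{prop.1}: since $A^2$ is symmetric, symmetry of $\mathcal{NM}(G)$ forces $AD$ to be symmetric, and then Remark~\ref{Remark.2} is invoked to conclude that $G$ has regular components. Your argument instead stays at the entry level, reduces directly to $\deg(i)=\deg(j)$ on every edge, and then uses a path argument within each component. Your route is more self-contained and in fact tighter: Remark~\ref{Remark.2} as stated only gives the implication ``regular components $\Rightarrow$ $AD$ symmetric'', so the paper's converse tacitly assumes the reverse implication, which is exactly what your edge-wise degree computation plus path argument supplies.
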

\begin{proof} Let $ G $ be a graph with $ w(G) $ components say $ G_1, G_2, ..., G_w $ such that each $ G_z $ is regular with degree $ r_z ,  1\leq z\leq w(G) $
	By the definition of $ \mathcal{NM}(G) $ when $ i $ is not adjacent to $ j $ then $\eta_{ij}=\eta_{ji}$ and when $ i $ is adjacent to $ j $ then \begin{equation}\label{Regular eq1}
	\eta_{ij}=|N_{G}(j)|-|N_{G}(i)\cap N_{G}(j)|=r_z-|N_{G}(i)\cap N_{G}(j)|
	\end{equation}
	\begin{equation}\label{Regular eq2}
	\eta_{ji}=|N_{G}(i)|-|N_{G}(i)\cap N_{G}(j)|=r_z-|N_{G}(i)\cap N_{G}(j)|
	\end{equation}
	From (\ref{Regular eq1}) and (\ref{Regular eq2})  we have $\eta_{ij}=\eta_{ji}$. 
	Therefore the $ \mathcal{NM}(G) $ is symmetric when the graph $ G $ has regular components. 
	
	Conversely, let the $ \mathcal{NM}(G) $ be symmetric. We know that, $ \mathcal{NM}(G) $ can be written as $AD-A^{2}$.
	Since sum of symmetric matrices is symmetric and $ AD =  \mathcal{NM}  + A^{2} $, we must have $ AD $ to be symmetric.  But from Remark \ref{Remark.2}, it is known that $ AD $ is symmetric whenever $ G $ is the union of regular components.
\end{proof}

Recall that a graph $ G $ is said to be a strongly regular graph with parameters $ (n, k, \mu_{1}  , \mu_{2}) $, if $ G $
is a $ k $-regular  graph on n vertices in which every pair of adjacent vertices has $ \mu_{1} $ common neighbours and every pair of non-adjacent vertices has $ \mu_{2} $ common neighbours. \\

\begin{proposition}\label{Stronglyregular}
	If a graph $ G $ is strongly regular then the entries of $ \mathcal{NM}(G) $ contains either two or three distinct values. 
\end{proposition}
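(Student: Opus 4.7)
The plan is to exploit the three-case definition of $\mathcal{NM}(G)$ and to show that, on a strongly regular graph with parameters $(n,k,\mu_1,\mu_2)$, each case collapses to a single scalar, leaving at most three distinct entry values overall.

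First I would handle the diagonal: since $G$ is $k$-regular, every diagonal entry equals $\eta_{ii} = -|N_G(i)| = -k$. Next, for an edge $(i,j)\in E(G)$, I would rewrite
\[
|N_G(j)-N_G(i)| \;=\; |N_G(j)| \;-\; |N_G(i)\cap N_G(j)| \;=\; k - \mu_1,
\]
using $k$-regularity for the first term and the defining property of $\mu_1$ (common neighbours of an adjacent pair) for the second. Finally, for a non-edge entry with $i\neq j$ and $(i,j)\notin E(G)$, the definition gives $\eta_{ij} = -|N_G(i)\cap N_G(j)| = -\mu_2$ directly from the $\mu_2$-condition.

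Collecting these three computations shows that every entry of $\mathcal{NM}(G)$ lies in the set $\{-k,\; k-\mu_1,\; -\mu_2\}$, so at most three distinct values appear. To obtain the ``two or three'' dichotomy I would observe that, for any non-trivial strongly regular graph ($k\geq 1$, with both adjacent and non-adjacent vertex pairs present), the values $k-\mu_1$ and $-\mu_2$ both actually occur as off-diagonal entries, and $-k$ occurs on the diagonal. The three values coincide pairwise only under arithmetic accidents: $-k = k-\mu_1$ forces $\mu_1 = 2k$, which is impossible since $\mu_1\leq k$; the remaining coincidences $-k=-\mu_2$ (i.e.\ $\mu_2=k$) or $k-\mu_1=-\mu_2$ (i.e.\ $\mu_1-\mu_2=k$) can occur and then merge two of the three values into one, yielding exactly two distinct entries.

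I do not expect a genuine obstacle here: the proof is essentially a direct substitution of the strongly regular parameters into the piecewise definition of $\mathcal{NM}(G)$. The only subtlety worth flagging explicitly is the verification that $-k$ cannot coincide with $k-\mu_1$, which rules out the degenerate possibility of only one distinct value and so secures the lower bound of two.
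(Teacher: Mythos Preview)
Your proposal is correct and follows essentially the same direct-substitution approach as the paper: plug the strongly regular parameters into the three cases of the definition of $\mathcal{NM}(G)$ to obtain the entry set $\{-k,\,k-\mu_1,\,-\mu_2\}$. Your discussion of which pairs of values can coincide is in fact more thorough than the paper's---the paper mentions only the coincidence $k=\mu_2$, whereas you also note the possibility $k-\mu_1=-\mu_2$ and explicitly rule out $-k=k-\mu_1$, thereby justifying the lower bound of two distinct values that the paper leaves implicit.
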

\begin{proof}
	By the definition of $ \mathcal{NM}(G) $ it immediate follows that for a strongly regular graph $G$,  
	$$\eta_{ij}(G)=\begin{cases}
	-k, &  \text{ if } i=j\\
	k-\mu_1,  &    \text{ if }    (i,j)\in E(G)\\
	-\mu_2,  & \text{ if }    (i,j)\notin E(G)\\
	\end{cases} $$ \\
	where $ \mu_1=|N_{G}(i)\cap N_{G}(j)|,$ for $(i,j)\in E(G)$  and $\mu_2=|N_{G}(i)\cap N_{G}(j)|,$ for $(i,j)\notin E(G) $.
	This implies the entries of $ \mathcal{NM}(G) $ of a strongly regular graph takes values from $ \{-k, k-\mu_{1}, -\mu_{2} \}$ or $ \{-k, k-\mu_{1}\}$,  when $k=\mu_{2}  $. 
	% when G is $K_k$ or $K_{k,k}$ 
\end{proof}

\begin{remark}
	Note that the converse of the above proposition need not be true.
\end{remark}

\begin{example}
	\normalfont{Figure} \ref{Figure.5a} is the $ \mathcal{NM}(G) $ containing only three distinct values as entries, namely, $\{-2,0,2\}$.  Figure \ref{Figure.5b} is the corresponding graph of Figure \ref{Figure.5a}. Note that the graph is not a strongly regular graph.
	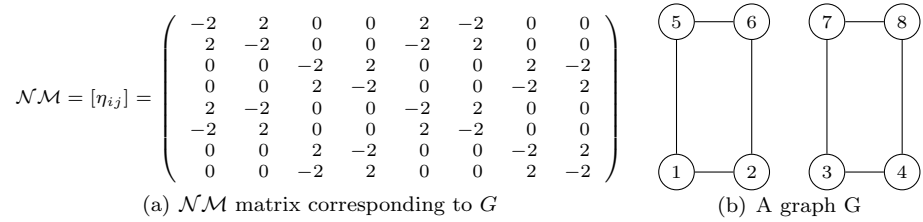
\begin{figure}[ht!] \scriptsize
		\centering 
		\subfigure[$ \mathcal{NM} $ matrix corresponding to $G$]  
		{\begin{tikzpicture} \label{Figure.5a}  \node (0,0.1){  $ \mathcal{NM}  = [\eta_{ij}]=\left( \begin{array}{c c c c c c c c }  
				-2&     ~~2&     ~~0&     ~~0&     ~~2&    -2&     ~~0&     ~~0\\
				~~2&    -2&     ~~0&     ~~0&    -2&     ~~2&     ~~0&     ~~0\\
				~~0&     ~~0&    -2&     ~~2&     ~~0&     ~~0&    ~~2&    -2\\
				~~0&     ~~0&     ~~2&    -2&     ~~0&     ~~0&    -2&     ~~2\\
				~~2&    -2&     ~~0&     ~~0&    -2&     ~~2&     ~~0&     ~~0\\
				-2&     ~~2&     ~~0&     ~~0&     ~~2&    -2&     ~~0&     ~~0\\
				~~0&     ~~0&     ~~2&    -2&     ~~0&     ~~0&    -2&     ~~2\\
				~~0&     ~~0&    -2&     ~~2&     ~~0&     ~~0&     ~~2&    -2\\
				\end{array} \right) $}; \end{tikzpicture}} 
		\quad  \subfigure[A graph G]    
		{\begin{tikzpicture}[scale=1,auto=left,every node/.style={draw,circle}] 
			\node (n1) at (1,0)  {1};
			\node (n2) at (2,0)  {2};  
			\node (n3) at (3,0)  {3};
			\node (n4) at (4,0)  {4};
			\node (n5) at (1,2)  {5};
			\node (n6) at (2,2)  {6};
			\node (n7) at (3,2)  {7};
			\node (n8) at (4,2)  {8};
			\foreach \from/\to in {n1/n2,n5/n1,n6/n2, n3/n4, n3/n7, n4/n8, n7/n8,n5/n6}
			\draw (\from) -- (\to)      ;
			\label{Figure.5b}  \end{tikzpicture}}
		\caption{A graph $G$ and its $ \mathcal{NM} $ matrix.}
		\label{Figure.5}
	\end{figure} 
\end{example}

\begin{proposition}
	If at least one row of $ \mathcal{NM}(G) $ has no zero entries then the graph $ G $ has diameter at most 4.
\end{proposition}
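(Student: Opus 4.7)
The plan is to interpret the condition ``row $i$ has no zero entry'' directly through the three-case definition of $\eta_{ij}$ and conclude that the vertex $i$ is eccentrically close to every other vertex, after which the triangle inequality on distances finishes the argument.

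First I would analyze when $\eta_{ij}$ can vanish. There are three cases to check:
\begin{itemize}
\item Diagonal: $\eta_{ii}=-|N_G(i)|$, which equals $0$ iff $i$ is isolated.
\item Edge case $(i,j)\in E(G)$: $\eta_{ij}=|N_G(j)-N_G(i)|$. Because $G$ is simple, the edge $ij$ forces $i\in N_G(j)$ while $i\notin N_G(i)$, so $i\in N_G(j)\setminus N_G(i)$ and therefore $\eta_{ij}\geq 1>0$. So this case never contributes a zero entry.
\item Non-edge case $i\ne j$, $(i,j)\notin E(G)$: $\eta_{ij}=-|N_G(i)\cap N_G(j)|$, which equals $0$ iff $i$ and $j$ share no common neighbour, i.e.\ iff there is no walk of length $2$ between them, equivalently $d_G(i,j)\geq 3$ (where we adopt $d_G(i,j)=\infty$ if $j$ lies in a different component, including the case when $j$ is isolated).
\end{itemize}
This is the content already recorded in the observations inside the proof of Proposition~\ref{prop.2}.

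Next I would take the contrapositive reading of the hypothesis. Assume row $i$ of $\mathcal{NM}(G)$ has no zero entry. The diagonal condition $\eta_{ii}\ne 0$ says $i$ is not isolated, and for every $j\ne i$ the entry $\eta_{ij}\ne 0$ forces us into one of the two nonzero subcases above, both of which give $d_G(i,j)\leq 2$. Thus the eccentricity of $i$ satisfies $\mathrm{ecc}(i)\leq 2$, and in particular $G$ is connected.

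Finally I would close with a triangle-inequality argument on the metric $d_G$: for any two vertices $u,v\in V(G)$,
\begin{equation*}
d_G(u,v)\;\leq\; d_G(u,i)+d_G(i,v)\;\leq\;2+2\;=\;4,
\end{equation*}
so $\mathrm{diam}(G)\leq 4$. The only subtlety, and the one place where a reader could trip, is verifying that the edge case $(i,j)\in E(G)$ genuinely never produces $\eta_{ij}=0$; this is where simplicity of $G$ (no loops) is silently but essentially used, so I would make that point explicit rather than absorbing it into the definition.
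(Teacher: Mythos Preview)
Your proof is correct and follows essentially the same route as the paper: show that the hypothesis forces $d_G(i,j)\le 2$ for every $j$, then finish with the triangle inequality $d_G(u,v)\le d_G(u,i)+d_G(i,v)\le 4$. The only difference is cosmetic: the paper invokes the two-level decomposition observations of Proposition~\ref{prop.2} to conclude $d_G(i,j)\le 2$, whereas you unpack the three cases of the definition of $\eta_{ij}$ directly and make explicit why the edge case can never yield a zero entry.
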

\begin{proof} Let $ i^{th} $ row of $  \mathcal{NM} (G) $ have no zero entries then by using the two level decomposition definition we have $ d_{G}(i,j)\leq 2, \forall  j\in V(G)-i.$, otherwise, $ d_{G}(i,j)=3 $ for some $ j $ this implies $ \eta_{ij}=0 $.\\
	Therefore for any $ j,k\in V(G)-i $ we have $ d_{G}(i,j)\leq 2 $ and $ d_{G}(i,k)\leq 2. $\\
	So, $ d_{G}(j,k)\leq d_{G}(j,i)+d_{G}(i,k)\leq 2+2=4 $ 
\end{proof}
\begin{remark}
	Note that the converse of the above proposition need not be true. It is well known that the cubic graph on 8 verrtices ($ Q_{3} $) has diameter 3 but every row of  $\mathcal{NM} (Q_{3}) $ contains exactly one zero.
\end{remark}

\begin{proposition}
	The $ \mathcal{NM}(G) $ has no zero entries if and only if the graph $ G $ has diameter at most 2.
\end{proposition}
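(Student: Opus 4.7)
The plan is to split the claim into the two implications, after first classifying when an entry $\eta_{ij}$ of $\mathcal{NM}(G)$ can be zero. I will read off this classification directly from the definition, and the key preliminary observation is that the adjacent case never contributes a zero: if $(i,j)\in E(G)$ then $i\in N_G(j)$ while $i\notin N_G(i)$ (no self-loops), so $i\in N_G(j)\setminus N_G(i)$ and therefore $\eta_{ij}=|N_G(j)-N_G(i)|\geq 1$. Consequently a zero entry must come from one of only two situations: either (a) a diagonal position with $|N_G(i)|=0$, i.e.\ $i$ is an isolated vertex, or (b) an off-diagonal non-edge $(i,j)\notin E(G)$ with $|N_G(i)\cap N_G(j)|=0$, i.e.\ $i$ and $j$ share no common neighbour.

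For the forward direction, assume $\mathrm{diam}(G)\leq 2$. This already forces $G$ to be connected on at least two vertices (a graph with an isolated vertex or more than one component has infinite, or at least $>2$, diameter), so case (a) is ruled out and every diagonal entry is negative. For any non-adjacent pair $i\neq j$, the diameter hypothesis gives $d_G(i,j)=2$, which means there is at least one common neighbour, and so by case (b) $\eta_{ij}\leq -1$. Combined with $\eta_{ij}\geq 1$ on edges from the preliminary observation, this shows every entry of $\mathcal{NM}(G)$ is nonzero.

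For the converse, assume $\mathcal{NM}(G)$ has no zero entries. Nonzero diagonals eliminate isolated vertices. For any two vertices $i\neq j$: if $(i,j)\in E(G)$ then $d_G(i,j)=1$, and if $(i,j)\notin E(G)$ then $\eta_{ij}\neq 0$ forces $|N_G(i)\cap N_G(j)|\geq 1$, i.e.\ $i$ and $j$ share a common neighbour, giving $d_G(i,j)=2$. In particular $G$ is connected, and every pair of vertices lies at distance at most $2$, so $\mathrm{diam}(G)\leq 2$.

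No step here is really an obstacle; the only thing that is easy to overlook is the observation that $\eta_{ij}$ on an edge is automatically at least $1$ because $i$ itself is a neighbour of $j$ but not of itself. This is what makes the characterization clean (we do not have to worry about ``accidental'' zeros on edges) and is the single computation I would want to state explicitly up front before splitting into the two implications.
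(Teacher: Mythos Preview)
Your proof is correct and follows essentially the same route as the paper: both arguments rest on the observation that $\eta_{ij}=0$ precisely when $i\neq j$, $(i,j)\notin E(G)$, and $i,j$ share no common neighbour (equivalently $d_G(i,j)\geq 3$), together with the isolated-vertex case on the diagonal. The paper's version is terser, invoking its earlier level-decomposition observations implicitly, whereas you spell out explicitly why an edge can never produce a zero entry; this extra care is a genuine improvement in clarity but not a different method.
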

\begin{proof} $ \mathcal{NM}(G) $ has no zero entries $\iff$  for every $i$,  $i^{th}$ row of $ \mathcal{NM}(G) $ has no zero entries 
	$\iff  \forall i,j, i \neq j, d_{G}(i,j)\leq 2,
	\iff diameter(G)\leq 2$.
\end{proof} 
\begin{proposition}\label{Triangle free}
	The graph $ G $ is triangle-free if and only if $\eta_{ij}=|\eta_{jj}|$ $\forall (i,j)\in E(G)$.
\end{proposition}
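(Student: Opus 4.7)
The plan is to translate the identity $\eta_{ij}=|\eta_{jj}|$ directly into a statement about common neighbours, using only the definition of $\mathcal{NM}(G)$, and then observe that this is exactly the triangle-free condition.

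First I would unpack the two quantities involved. By definition, $\eta_{jj} = -|N_G(j)|$, so $|\eta_{jj}| = |N_G(j)|$. Similarly, for $(i,j) \in E(G)$ the off-diagonal entry is $\eta_{ij} = |N_G(j) - N_G(i)|$. Since $N_G(j) - N_G(i)$ and $N_G(i) \cap N_G(j)$ partition $N_G(j)$, we have the basic set-theoretic identity
\[
|N_G(j) - N_G(i)| = |N_G(j)| - |N_G(i) \cap N_G(j)|.
\]
Combining these, the condition $\eta_{ij} = |\eta_{jj}|$ becomes simply $|N_G(i) \cap N_G(j)| = 0$, i.e.\ $i$ and $j$ share no common neighbour.

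Now I would finish both directions of the equivalence. For the forward direction, assume $G$ is triangle-free and let $(i,j)\in E(G)$; any $k \in N_G(i)\cap N_G(j)$ would make $\{i,j,k\}$ a triangle, contradicting triangle-freeness, so the intersection is empty and the displayed equality above gives $\eta_{ij} = |N_G(j)| = |\eta_{jj}|$. Conversely, assume $\eta_{ij} = |\eta_{jj}|$ for every edge $(i,j)$; then $N_G(i)\cap N_G(j)=\emptyset$ for every edge, which means no edge is contained in a triangle, hence $G$ is triangle-free.

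There is essentially no obstacle here — the result is a definitional rewrite once one notices that $|\eta_{jj}|$ is the degree and that $|N_G(j) - N_G(i)| = |N_G(j)|$ forces $N_G(i)$ and $N_G(j)$ to be disjoint. The only thing to be a little careful about is recording the role of the hypothesis $(i,j)\in E(G)$ (so that the row-entry formula $|N_G(j)-N_G(i)|$ is the one that applies), and noting the use of the fact that a triangle through the edge $ij$ is the same as a common neighbour of $i$ and $j$.
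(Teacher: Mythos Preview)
Your proof is correct and follows essentially the same route as the paper: unpack $|\eta_{jj}|=|N_G(j)|$ and $\eta_{ij}=|N_G(j)-N_G(i)|=|N_G(j)|-|N_G(i)\cap N_G(j)|$, so the stated equality is equivalent to $N_G(i)\cap N_G(j)=\emptyset$ for every edge, which is precisely triangle-freeness. If anything, you are a bit more explicit than the paper in writing out both directions of the equivalence.
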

\begin{proof} Graph $ G  $ is triangle-free $\iff  N_{G}(i)\cap N_{G}(j)=\emptyset $ for $ (i,j)\in E(G)$. By the definition of $ \mathcal{NM}(G) $ if $ i $ is adjacent to $ j $ then $ \eta_{ij}=|N_{G}(j)-N_{G}(i)|=|N_{G}(j)|-|N_{G}(i)\cap N_{G}(j)|=|N_{G}(j)| $. Now in the $ i^{th } $ row $ \eta_{ij} = |N_{G}(j)|=|\eta_{jj}|$.
\end{proof}
\begin{proposition} Given a graph $G$, the number of triangles in $G$ is given by  
	$$\dfrac{1}{6} \sum\limits_{i}\sum\limits_{j\in N_{G}(i)}\Bigg(|\eta_{jj}|-\eta_{ij}\Bigg).$$
\end{proposition}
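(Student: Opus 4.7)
The plan is to reduce the double sum to a well known triangle count by simplifying the summand via the defining formulas for $\eta_{ij}$. First I would observe that, since $\eta_{jj} = -|N_G(j)|$, we have $|\eta_{jj}| = |N_G(j)|$, and for $(i,j) \in E(G)$ the definition of $\mathcal{NM}(G)$ gives
\[
\eta_{ij} = |N_G(j) - N_G(i)| = |N_G(j)| - |N_G(i) \cap N_G(j)|.
\]
Subtracting therefore yields the clean identity
\[
|\eta_{jj}| - \eta_{ij} = |N_G(i) \cap N_G(j)| \qquad \text{whenever } (i,j)\in E(G),
\]
which is exactly the number of common neighbours of $i$ and $j$, i.e.\ the number of triangles of $G$ containing the edge $ij$.

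Next I would do the combinatorial accounting. For a fixed vertex $i$, summing $|N_G(i)\cap N_G(j)|$ over $j \in N_G(i)$ counts, over all edges $ij$ at $i$, the triangles through that edge. Each triangle containing $i$ has two edges incident to $i$, so it is counted twice. Then summing over $i$ counts each triangle once per vertex and twice per vertex, giving a total multiplicity of $3 \cdot 2 = 6$. Hence
\[
\sum_i \sum_{j \in N_G(i)} \bigl(|\eta_{jj}| - \eta_{ij}\bigr) = 6 \cdot (\text{number of triangles in } G),
\]
and dividing by $6$ yields the stated formula.

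There is essentially no serious obstacle here; the only thing to be careful about is the double counting bookkeeping (per-vertex factor of $2$ and per-triangle factor of $3$), and confirming that the formula for $\eta_{ij}$ applies uniformly because the outer sum is restricted to $j \in N_G(i)$, so the non-edge case of the definition is never invoked.
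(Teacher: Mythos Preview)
Your proof is correct and follows essentially the same route as the paper: both reduce the summand to $|N_G(i)\cap N_G(j)|$ via the definitions of $\eta_{jj}$ and $\eta_{ij}$, and then argue that the double sum counts each triangle six times (factor $2$ from the two neighbours of $i$ in the triangle, factor $3$ from the three vertices). The only cosmetic difference is the order of presentation---the paper first sets up the combinatorial count and then translates into $\eta$-entries, whereas you translate first and count second.
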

\begin{proof}
	Given a vertex $ i $, when $ i $ is adjacent to $ j $ and there exists at least one common neighbour $ x $, for $ i $ and $ j $, we get a triangle. 
	
	$ \therefore  $ Number of triangle containing the vertex $i$ is given by $NT(i) = \dfrac{1}{2} \sum\limits_{j\in N_{G}(i)} |N_{G}(i)\cap N_{G}(j)| $, since a triangle $ <i,j,x,i> $ will be counted twice, one for each $j, x \in N_{G}(i)$. Hence, 
	\begin{eqnarray*}
		\text{Total number of triangles in the graph }&=& \dfrac{1}{3} \sum\limits_{i} NT(i)\\
		&=& \dfrac{1}{6} \sum\limits _{i}\sum\limits_{j\in N_{G}(i)}|N_{G}(i)\cap N_{G}(j)|\\
		&= &\dfrac{1}{6}\sum\limits_{i}\sum\limits_{j\in N_{G}(i)}\Bigg(|N_{G}(j)|-|N_{G}(j)-N_{G}(i)|\Bigg) \\
		&=&\dfrac{1}{6}\sum\limits_{i}\sum\limits_{j\in N_{G}(i)}\Bigg(|\eta_{jj}|-\eta_{ij}\Bigg)  
	\end{eqnarray*} 
	Hence the claim.
\end{proof}
\begin{remark}
	It is well known that number of triangle in a graph is equal to $\dfrac{1}{6}  Trace(A^3) $ or $ \dfrac{1}{6}\sum\limits_{i=1}^{n}\lambda_{i}^{3}  $ where $ A $ is the adjacency matrix of the graph and $ \lambda_{i}, 1\leq i\leq n $ is the eigenvalue of $ A $. Note that if we want to count a triangle using the $ \mathcal{NM}(G) $ the computational time involved is very less when compared to compute $\dfrac{1}{6}  Trace(A^3) $ or $ \dfrac{1}{6}\sum\limits_{i=1}^{n}\lambda_{i}^{3}  $
\end{remark}
\begin{proposition}
	Given a graph $ G $, the number of 4 cycles(including induced and non-induced) is equal to $ \frac{1}{4}\sum\limits_{i=1}^n\Bigg(\sum\limits_{j\in N_{G}(i)}{|\eta_{jj}|-\eta_{ij}\choose 2} +\sum\limits_{j\notin N_{G}(i)}{|\eta_{ij}|\choose 2}\Bigg) $
\end{proposition}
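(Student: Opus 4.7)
The plan is to enumerate 4-cycles by choosing a pair of ``diagonal'' vertices and then two common neighbors of them. First, I would establish the following counting identity: for any two distinct vertices $i, j$ (whether adjacent or not) and any unordered pair $\{x, y\}$ of distinct common neighbors of $i$ and $j$, one obtains a 4-cycle $i - x - j - y - i$. Conversely, any 4-cycle $a - b - c - d - a$ has exactly two diagonal pairs, namely $\{a, c\}$ and $\{b, d\}$, and for each diagonal pair the remaining two vertices form a pair of common neighbors that reconstructs the cycle. Hence each 4-cycle is counted exactly twice in this enumeration, giving
$$\#\{\text{4-cycles in } G\} = \frac{1}{2}\sum_{\{i,j\}:\, i \neq j}\binom{|N_{G}(i) \cap N_{G}(j)|}{2} = \frac{1}{4}\sum_{i=1}^n \sum_{j \neq i}\binom{|N_{G}(i) \cap N_{G}(j)|}{2},$$
where the second equality passes from unordered to ordered pairs.

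Next, I would translate $|N_{G}(i) \cap N_{G}(j)|$ into entries of $\mathcal{NM}(G)$ by splitting the inner sum on adjacency, using exactly the two identities already exploited in the proof of Proposition \ref{Triangle free}. When $j \in N_{G}(i)$, we have $|N_{G}(i) \cap N_{G}(j)| = |N_{G}(j)| - |N_{G}(j) - N_{G}(i)| = |\eta_{jj}| - \eta_{ij}$. When $j \notin N_{G}[i]$, the defining case of $\mathcal{NM}(G)$ gives $\eta_{ij} = -|N_{G}(i) \cap N_{G}(j)| \le 0$, so $|N_{G}(i) \cap N_{G}(j)| = |\eta_{ij}|$. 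Substituting both into the split sum produces the stated expression.

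The main obstacle is verifying that the double-counting factor is exactly $2$ in all cases and that the range of $j$ in the second sum is correctly restricted. For the double count, an induced 4-cycle has both diagonals non-edges while a non-induced 4-cycle has one or two edge chords as diagonals; in every case, the cycle still possesses exactly two diagonal pairs, so each 4-cycle contributes exactly twice regardless of chord structure, which justifies the prefactor $\frac{1}{4}$ after expansion to ordered pairs. Also, the second sum must implicitly exclude $j = i$: the term $j = i$ would contribute $\binom{|\eta_{ii}|}{2} = \binom{|N_{G}(i)|}{2}$, corresponding to degenerate closed walks $i - x - i - y - i$ that are not 4-cycles, so ``$j \notin N_{G}(i)$'' is read throughout as $j \neq i$ with $j$ not a neighbor of $i$ (i.e.\ $j \notin N_{G}[i]$).
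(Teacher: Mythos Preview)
Your proposal is correct and follows essentially the same approach as the paper: both arguments establish the identity $\#\{\text{4-cycles}\}=\tfrac{1}{4}\sum_{i}\sum_{j\neq i}\binom{|N_G(i)\cap N_G(j)|}{2}$ and then split on adjacency to substitute the $\mathcal{NM}(G)$ entries. Your treatment is in fact more careful than the paper's, which simply asserts that $\sum_{j\neq i}\binom{|N_G(i)\cap N_G(j)|}{2}$ counts the 4-cycles through $i$ without justifying the factor and does not explicitly exclude $j=i$ from the second sum; your diagonal-pair argument and your remark that ``$j\notin N_G(i)$'' must be read as $j\notin N_G[i]$ fill exactly those gaps.
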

\begin{proof}
	Given a graph  $G$, the number of 4-cycles containing the vertex $i$ is given by $\sum \limits_{j=1,j\neq i}{|N_{G}(i)\cap N_{G}(j)|\choose 2}$.   
	Hence the total number of $4$-cycles (both induced and not induced) can be given by, 
	\begin{eqnarray*}
		\frac{1}{4}\sum \limits_{i=1}^n\sum \limits_{j=1,j\neq i}{|N_{G}(i)\cap N_{G}(j)|\choose 2} & = & \frac{1}{4}\sum \limits_{i=1}^n\Bigg(\sum\limits_{j\in N_{G}(i)}{|N_{G}(i)\cap N_{G}(j)|\choose 2}+\sum \limits_{j\notin N_{G}(i)}{|N_{G}(i)\cap N_{G}(j)|\choose 2}\Bigg)\\
		& = &\frac{1}{4}\sum\limits_{i=1}^n\Bigg(\sum\limits_{j\in N_{G}(i)}{|\eta_{jj}|-\eta_{ij}\choose 2} +\sum\limits_{j\notin N_{G}(i)}{|\eta_{ij}|\choose 2}\Bigg)\\
	\end{eqnarray*}
\end{proof}
\begin{remark}
	Note that in the above proof, $\dfrac{1}{4}\sum \limits_{i=1}^n\sum\limits_{j\notin N_{G}(i)}{|\eta_{ij}|\choose 2} $ gives a count of the total number of induced $C_{4}$ plus half the number of $ K_{4}-\{e\}. $\\
	Similarly, $  \dfrac{1}{4}\sum \limits_{i=1}^n\sum\limits_{j\in N_{G}(i)}{|\eta_{jj}|-\eta_{ij}\choose 2} $ gives the total number of $ K_{4} $ along with half the number of $ K_{4}-\{e\} $ in the graph.
\end{remark}
\begin{proposition}\label{Induced C4 free}
	A graph $ G $ is $ C_4$-free if and only if $ \eta_{ij}\geq -1, \forall (i,j)\notin E(G)$.
\end{proposition}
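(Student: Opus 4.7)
The proof plan is essentially a translation between the combinatorial meaning of $\eta_{ij}$ on a non-edge and the presence of a 4-cycle. For any pair $(i,j) \notin E(G)$ with $i \neq j$, the definition of $\mathcal{NM}(G)$ gives $\eta_{ij} = -|N_G(i) \cap N_G(j)|$, so the hypothesis $\eta_{ij} \geq -1$ is exactly the statement that any two non-adjacent vertices share at most one common neighbour. (The diagonal entries $\eta_{ii} = -|N_G(i)|$ are tacitly excluded, since $(i,i) \notin E(G)$ is the no-loop convention and they are handled separately.) Thus the proposition reduces to the purely combinatorial equivalence: $G$ contains no $C_4$ if and only if $|N_G(i) \cap N_G(j)| \leq 1$ for every non-adjacent pair.

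For the $(\Leftarrow)$ direction I would argue by contrapositive. Suppose $G$ contains a 4-cycle $v_1 v_2 v_3 v_4 v_1$. Then $v_1$ and $v_3$ are the two non-adjacent opposite vertices of the cycle, and both $v_2, v_4 \in N_G(v_1) \cap N_G(v_3)$, so $|N_G(v_1) \cap N_G(v_3)| \geq 2$. Since $(v_1,v_3) \notin E(G)$, this yields $\eta_{v_1 v_3} \leq -2$, contradicting the hypothesis.

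For the $(\Rightarrow)$ direction, again by contrapositive: assume $\eta_{ij} \leq -2$ for some non-adjacent $i \neq j$. Then $|N_G(i) \cap N_G(j)| \geq 2$, so I can pick two distinct common neighbours $u, v$; the four vertices $i, u, j, v$ together with the edges $iu, uj, jv, vi$ form a 4-cycle in $G$.

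No real obstacle is expected here, as the proof is essentially definition-chasing; the only care needed is in restricting attention to off-diagonal non-edges when rewriting $\eta_{ij} \geq -1$ as a cap on common neighbourhoods, and in identifying the opposite vertices of the 4-cycle in the reverse direction.
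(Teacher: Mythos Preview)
Your approach is essentially the paper's: both translate $\eta_{ij}\ge -1$ on a non-edge into $|N_G(i)\cap N_G(j)|\le 1$ and then pass to the combinatorial equivalence with the absence of a $4$-cycle. The paper's own proof is a two-line chain of $\iff$'s with no further detail, so your contrapositive expansion is a faithful elaboration of exactly the same idea.

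One point to tighten: in your $(\Leftarrow)$ direction you write that $v_1$ and $v_3$ are ``the two non-adjacent opposite vertices of the cycle'', but for an arbitrary $4$-cycle subgraph the chord $v_1v_3$ may be present, so non-adjacency is not automatic. The paper itself phrases the conclusion as ``no induced $C_4$'', and under that reading your $(\Leftarrow)$ step is justified; symmetrically, in your $(\Rightarrow)$ direction the $4$-cycle $i\,u\,j\,v$ you exhibit need not be induced (the diagonal $uv$ may be an edge), so strictly you only obtain a $C_4$ subgraph. This ambiguity lives in the statement rather than in your argument, and your proof is at the same level of precision as the paper's.
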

\begin{proof}
	By the definition of $ \mathcal{NM}(G) $, we can conclude
	\begin{eqnarray*}
		\eta_{ij}\geq -1&\iff & |N_{G}(i)\cap N_{G}(j)|\leq 1,(i,j)\notin E(G)\\
		&\iff & G \text{ has no induced } C_{4}
	\end{eqnarray*}
\end{proof}
Recall that the girth of a graph is the length of a shortest cycle contained in the graph.
\begin{proposition}
	A graph $ G $ has girth at least 5 if and only if $\eta_{ij}=|\eta_{jj}|,\forall (i,j)\in E(G)$ and $ \eta_{ij}\geq -1, \forall (i,j)\notin E(G)$
\end{proposition}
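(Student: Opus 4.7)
The plan is to observe that girth at least $5$ is equivalent to $G$ being simultaneously triangle-free and $C_4$-free, and then to invoke the two earlier propositions verbatim. Since $G$ is simple and undirected, the only cycles of length less than $5$ are triangles ($C_3$) and $4$-cycles ($C_4$), so $\mathrm{girth}(G)\geq 5$ if and only if $G$ contains no $C_3$ and no (not necessarily induced) $C_4$. One small point to be careful about: Proposition \ref{Induced C4 free} as stated rules out induced $C_4$'s, but under the extra hypothesis that $G$ is triangle-free, every $4$-cycle is automatically induced (a chord would create a triangle), so the combined conditions really do forbid all $4$-cycles.

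First I would prove the forward direction. Assume $\mathrm{girth}(G)\geq 5$. Then $G$ is triangle-free, so by Proposition \ref{Triangle free} we obtain $\eta_{ij}=|\eta_{jj}|$ for every edge $(i,j)$. Also $G$ has no $4$-cycle (induced or otherwise), so in particular no induced $C_4$; Proposition \ref{Induced C4 free} then yields $\eta_{ij}\geq -1$ for every non-edge $(i,j)$.

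For the converse, assume both numerical conditions. The first, together with Proposition \ref{Triangle free}, forces $G$ to be triangle-free. The second, with Proposition \ref{Induced C4 free}, forces $G$ to be free of induced $C_4$'s. Combining, suppose for contradiction that $G$ contains some $4$-cycle $v_1v_2v_3v_4v_1$. Because $G$ is triangle-free, neither of the chords $v_1v_3$ nor $v_2v_4$ can be present, so the cycle is induced—contradicting the absence of induced $C_4$'s. Hence $G$ contains no $C_3$ and no $C_4$ at all, so $\mathrm{girth}(G)\geq 5$.

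The argument is essentially a bookkeeping combination of Propositions \ref{Triangle free} and \ref{Induced C4 free}; the only genuine step is the short observation above that triangle-freeness upgrades ``no induced $C_4$'' to ``no $C_4$.'' That observation is what I expect to be the only place a careless proof could slip, so I would make it explicit rather than leave it implicit in the equivalence chain.
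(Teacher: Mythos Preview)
Your proof is correct and follows essentially the same approach as the paper: combine Proposition~\ref{Triangle free} (triangle-freeness) with Proposition~\ref{Induced C4 free} (no induced $C_4$) to conclude girth at least~$5$. In fact your version is slightly more careful than the paper's, since you make explicit the observation that under triangle-freeness any $4$-cycle is automatically induced---a step the paper leaves implicit in its final line.
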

\begin{proof}
	By \textnormal{Proposition \ref{Triangle free} } we get, \begin{eqnarray*}
		\eta_{ij}=|\eta_{jj}|, \forall (i,j)\in E(G) &\iff & G  \text{ is Triangle free }. 
	\end{eqnarray*}
	and by \textnormal{Proposition \ref{Induced C4 free} } we get,
	\begin{eqnarray*}
		\eta_{ij}\geq -1&\iff & |N_{G}(i)\cap N_{G}(j)|\leq 1,(i,j)\notin E(G)\\
		&\iff & G \text{ has no induced } C_{4}
	\end{eqnarray*}
	Therefore we can conclude $ G $  has girth atleast 5.
\end{proof}

\section{Conclusion}

In this paper, we have introduced a new graph matrix ($\mathcal{NM}(G) $) that can be associated with a graph to reveal more information when compared to the adjacency matrix. We have also systematically demonstrated the equivalence of the $ \mathcal{NM}(G) $ and the product of two other existing graph matrices, namely adjacency and Laplacian matrices. Further, we have endorsed its relationship with the concept of level decomposition of graph. 

Further, we have also substantiated the usefulness of the $ \mathcal{NM}(G) $ by identifying numerous properties that can be revealed with the aid of this matrix. In this process, we have found that many simple properties such as counting the number of triangles in a graph can be done in no-time.  

As an extension of this current work, in a sequel, our subsequent research article comprises of the study of the $ \mathcal{NM} $ spectrum. We are also analyzing the algorithmic properties of this matrix and other interesting graph properties that can be revealed. 

\subsection{Future scope}
In our first attempt to analyze a new graph matrix, we have only studied its correctness and very few graph and matrix properties in this paper. This graph matrix seems to be quite promising and can be applicable in studying problems relating to domination in graphs and graph isomorphism problem. We have already initiated our study in this direction.

\section*{Acknowledgements}
The authors would like to acknowledge and thank DST-SERB Young Scientist Scheme, India [Grant No. SB/FTP/MS-050/2013] for their support to carry out this research at SRM Research Institute, SRM University. Mr. K. Sivakumar would also like to thank SRM Research Institute for their support during the preparation of this manuscript.  

%\section*{References} 

\end{document}